\documentclass{article}

\usepackage{fullpage}
\usepackage[utf8]{inputenc}
\usepackage[T1]{fontenc}
\usepackage{microtype}
\usepackage{indentfirst}

\usepackage{amsmath,amssymb,amsfonts}
\usepackage{amsthm}
\usepackage{mathtools}
\usepackage{bm}
\usepackage{commath}
\usepackage{relsize}
\usepackage{bbm}
\usepackage{nicefrac}
\usepackage{mysymbol}
\usepackage{wrapfig}

\usepackage{graphicx}
\usepackage{subfigure}
\usepackage{diagbox}
\usepackage{multirow}
\usepackage{booktabs}
\usepackage{hhline}
\usepackage{array}
\usepackage{pgfplots}
\usepackage{tikz}
\usetikzlibrary{positioning}

\usepackage{color}
\usepackage{tcolorbox}

\usepackage{algorithm}
\usepackage{algpseudocode}
\usepackage{listings}

\usepackage{cite}
\usepackage{url}

\usepackage{pifont}
\usepackage{newfloat}
\usepackage{enumerate}
\usepackage{enumitem}
\usepackage[title]{appendix}
\usepackage[font={small}]{caption}

\newtheorem{theorem}{Theorem}[section]
\newtheorem{proposition}[theorem]{Proposition}

\theoremstyle{definition}

\theoremstyle{remark}

\allowdisplaybreaks
\newcolumntype{?}{!{\vrule width 1pt}}


\title{\textbf{On the Fragility of AI-Based Channel Decoders under Small Channel Perturbations}}

\date{}

\newcommand*\samethanks[1][\value{footnote}]{\footnotemark[#1]}

\author{
Haoyu Lei\thanks{Equal contribution} \thanks{Department of Computer Science and Engineering, The Chinese University of Hong Kong, hylei22@cse.cuhk.edu.hk},
Mohammad~Jalali\samethanks[1] \thanks{Department of Computer Science and Engineering, The Chinese University of Hong Kong},
Chin Wa Lau\thanks{Huawei Technologies Co., Ltd., Theory Lab}, 
Farzan~Farnia\thanks{Department of Computer Science and Engineering, The Chinese University of Hong Kong, farnia@cse.cuhk.edu.hk}
}

\begin{document}
\maketitle

\begin{abstract} 
Recent advances in deep learning have led to AI-based error correction decoders that report empirical performance improvements over traditional belief-propagation (BP) decoding on AWGN channels. While such gains are promising, a fundamental question remains: where do these improvements come from, and what cost is paid to achieve them? In this work, we study this question through the lens of robustness to distributional shifts at the channel output. We evaluate both input-dependent adversarial perturbations (FGM and projected gradient methods under $\ell_2$ constraints) and universal adversarial perturbations that apply a single norm-bounded shift to all received vectors. Our results show that recent AI decoders, including ECCT and CrossMPT, could suffer significant performance degradation under such perturbations, despite superior nominal performance under i.i.d. AWGN. Moreover, adversarial perturbations transfer relatively strongly between AI decoders but weakly to BP-based decoders, and universal perturbations are substantially more harmful than random perturbations of equal norm. These numerical findings suggest a potential robustness cost and higher sensitivity to channel distribution underlying recent AI decoding gains.
\end{abstract}

\section{Introduction}

Reliable decoding is a cornerstone of digital communication systems. Shannon’s channel coding theorem \cite{shannon1948} establishes that rates up to channel capacity are achievable in principle, but approaching these limits in practice critically depends on the design of efficient and accurate decoding algorithms. Over the past decades, this challenge has been addressed through the development of powerful code families and iterative decoding methods, including turbo decoding \cite{berrou1993} and belief-propagation (BP) decoding for LDPC codes \cite{gallager1962,tanner1981,richardson2008}. In modern wireless systems, decoders must operate under stringent constraints on complexity, latency, energy consumption, numerical precision, and robustness to channel mismatch, making decoder design a central bottleneck in translating information-theoretic limits into practical performance \cite{3gpp38212,kaykacegilmez2020}.

Motivated by the success of deep learning in other domains, recent years have seen growing interest in applying neural network methods to channel decoding. Early learning-based approaches primarily focused on enhancing traditional decoders, for example by parameterizing belief-propagation updates or unrolling iterative decoding algorithms, as in neural BP \cite{nachmani2016}. Other works explored model-free neural decoders that learn a direct mapping from channel outputs to codewords, particularly in short blocklength regimes \cite{gruber2017}. While these approaches demonstrated promising gains in selected settings, they largely preserved the structure of classical decoding algorithms, and thus did not fundamentally depart from traditional message-passing paradigms.

More recently, transformer-based decoder architectures have been proposed that depart more substantially from classical decoding rules. Notably, the Error Correction Code Transformer (ECCT) \cite{choukroun2022ecct} introduces a transformer-based decoder with code-structure-aware attention mechanisms and reports considerable improvements over prior neural decoders. CrossMPT \cite{park2025crossmpt} further integrates masked cross-attention between magnitude and syndrome representations and reports empirical performance gains across several code families, including explicit improvements over conventional BP decoding under the considered settings. These results suggest that modern AI decoders may extract performance gains beyond traditional message-passing algorithms, raising a fundamental question: \emph{what is the source of these gains?}

In this work, we study this question through the lens of robustness to distributional shifts at the channel output. Specifically, we ask whether the nominal performance improvements reported for recent AI decoders could come at the cost of increased sensitivity to small, norm-bounded perturbations applied to the received signal. Such perturbations can be interpreted as worst-case deviations from the target AWGN model and correspond to small shifts in the distribution of channel outputs, which are commonly modeled by Wasserstein-type robustness notions \cite{esfahani2018}. Understanding decoder behavior under these shifts is essential for assessing the reliability of AI-based decoders beyond standard channel assumptions.

To address this question, we adopt tools from the adversarial robustness literature \cite{goodfellow2015,madry2018} and evaluate both input-dependent and universal perturbations applied to the received vector. We consider fast gradient methods and multi-step projected gradient attacks constrained to an $\ell_2$ norm ball, as well as universal adversarial perturbations that apply the same perturbation to all channel outputs \cite{moosavi2017}. The attacks are implemented at the AWGN channel output and optimized using gradient-based methods with randomized smoothing \cite{cohen2019} to stabilize the optimization in the presence of non-smooth decoder mappings.

Our numerical results show that standard, input-dependent adversarial perturbations can induce significant degradation in the frame error rate of recent transformer-based AI decoders, including ECCT \cite{choukroun2022ecct} and CrossMPT \cite{park2025crossmpt}.  
We further investigate universal perturbations, which are particularly relevant in practice since they do not depend on individual received vectors and could correspond to persistent interference or hardware-induced distortions. We find that universal adversarial perturbations \cite{moosavi2017} are significantly more harmful to AI decoders than random perturbations with the same $\ell_2$ norm. This observation indicates that the degradation cannot be attributed merely to increased noise power, but rather to structured vulnerabilities in the learned decoding rules.

Finally, we study the transferability of adversarial perturbations across decoders, a phenomenon widely observed in adversarial learning \cite{goodfellow2015}. Perturbations optimized for one transformer-based AI decoder often transfer effectively to another, while exhibiting much weaker transferability to BP-based decoders. Conversely, classical decoders remain comparatively robust to perturbations crafted for AI decoders. Together, these results suggest that recent AI decoding gains may rely on brittle, distribution-specific features, and they quantify a robustness cost that would be less present in traditional message-passing decoding.

\section{Related Works}

\noindent\textbf{AI and ML decoders.}
Learning-based decoding has evolved from unrolled belief propagation \cite{nachmani2016learning} to end-to-end architectures. Examples include feedback coding in Deepcode \cite{kim2018deepcode}, Turbo Autoencoders \cite{jiang2019turboae}, and meta-learning adaptations via MIND \cite{jiang2019mind}. Recently, the transformers ECCT \cite{choukroun2022ecct}, CrossMPT \cite{park2025crossmpt} and DiffMPT~\cite{lau2025interplay} have utilized attention mechanisms to improve conventional baselines. Further models include diffusion-based iterative refinement \cite{choukroun2023ddoecc}, multi-code foundation models \cite{choukroun2024foundation}, and one-step consistency models~\cite{lei2025consistency}.

\noindent\textbf{Robustness of AI decoders.}
Research indicates that learned receivers are highly sensitive to perturbations. Physical adversarial attacks have been shown to significantly degrade end-to-end autoencoders compared to classical schemes \cite{sadeghi2019physical}. In wireless security, input-agnostic perturbations have been developed for DNN-based systems \cite{bahramali2021robust}. Furthermore, deep receivers demonstrate vulnerability to constrained attacks, such as those limited by power and PAPR \cite{chen2023air}.

\noindent\textbf{Adversarial and distributional robustness in information theory.}
Classical work investigates reliable communication under worst-case models, centrally the arbitrarily varying channel (AVC) \cite{csiszar1988avc}. Related adversarial settings include oblivious channels \cite{langberg2008oblivious} and causal adversaries \cite{zhang2022causal}. Recently, capacity has been characterized for adversaries with computationally restricted observations \cite{ruzomberka2024capacity}. Parallelly, distributional robustness via Wasserstein ambiguity sets \cite{esfahani2018dro} provides a framework to connect norm-bounded perturbations to distribution shifts.

\section{Preliminaries}

We consider a binary linear error-correcting code with codebook $\mathcal{C} \subset \{0,1\}^n$, defined by a generator matrix $G \in \mathbb{F}_2^{k \times n}$ and a parity-check matrix $H \in \mathbb{F}_2^{(n-k) \times n}$ satisfying $G H^\top = 0$ over $\mathbb{F}_2$. A message $m \in \{0,1\}^k$ is encoded as the codeword $x = mG \in \mathcal{C}$ and modulated using Binary Phase-Shift Keying (BPSK), where $0 \mapsto +1$ and $1 \mapsto -1$, yielding the transmitted codeword $x_s \in \{-1,+1\}^n$. The codeword is transmitted over an Additive White Gaussian Noise (AWGN) channel, and the received signal is
\begin{equation}
y = x_s + z,
\end{equation}
where the noise vector $z \sim \mathcal{N}(0,\sigma^2 I_n)$ is drawn from a Gaussian distribution. The goal of the decoder is to estimate the transmitted codeword $\hat{x}_s$ from observed $y$.

An important quantity used in decoding is the syndrome, which is computed from a hard-demodulated version of the received signal. Specifically, we define $y_b = \operatorname{bin}(\operatorname{sign}(y))$, where $\operatorname{sign}(y_i)=+1$ if $y_i \ge 0$ and $-1$ otherwise, and $\operatorname{bin}(\cdot)$ maps $\{-1,+1\}$ to $\{1,0\}$. The corresponding syndrome is $s(y) = H y_b^\top \in \mathbb{F}_2^{n-k}$, and an error is detected whenever $s(y) \neq 0$.

Following the preprocessing strategy of \cite{bennatan2018deep}, the input to the neural decoder is constructed by concatenating the magnitude of the received signal with the syndrome,
\begin{equation}
\phi(y) = \big[\, |y|,\; s(y) \,\big],
\end{equation}
resulting in an input vector of dimension $n + (n-k)$. This representation exposes both reliability information and parity-check violations to the neural decoder and has been shown to reduce overfitting while preserving decoding performance.

\section{Robustness of AI Decoders to Input Perturbations}
We study the worst-case sensitivity of channel decoders to small perturbations applied at the channel output. Let $f$ denote a decoder (classical or AI-based) mapping a channel observation $y\in\mathbb{R}^n$ to an estimate $\hat{x_s}\in\{-1,1\}^n$. Let $x^\star\in\{-1,1\}^n$ denote the transmitted codeword, we train the AI decoder by optimizing:
\begin{equation}
    \min_{f\in\mathcal{F}}\;\: \mathbb{E} \left[ \ell\bigl(f(y), x^\star\bigr) \right],
\end{equation}
where $\ell(\cdot,\cdot)$ denotes the loss function, specifically the Binary Cross Entropy (BCE) as defined in~\cite{choukroun2022error, park2025crossmpt}.

\paragraph{Sample-wise adversarial objective.}
Given a perturbation budget $\varepsilon>0$, we define the sample-wise $\ell_2$-bounded adversarial perturbation as
\begin{equation}
\max_{\delta\in\mathbb{R}^n}\ \ell\bigl(f(y+\delta),x^\star\bigr)
\quad \text{s.t.}\quad \|\delta\|_2\le \varepsilon.
\label{eq:adv_obj_raw}
\end{equation}
Directly optimizing~\eqref{eq:adv_obj_raw} can be challenging when $f$ includes hard operations (e.g., sign, binarization, syndrome computation), which may yield non-smooth or unstable gradients. We therefore attack a smoothed surrogate as explained below.

\paragraph{Randomized smoothing of the decoder optimization objective.}
Fix $\nu>0$ and let $V\sim\mathcal{N}\bigl(0,\nu^2 I_n\bigr)$. We define the Gaussian-smoothed loss
\begin{equation}
g(y,\delta)\triangleq \mathbb{E}_{V\sim\mathcal{N}(0,\nu^2 I_n)}\Bigl[\ell\bigl(f(y+V+\delta),x^\star\bigr)\Bigr],
\label{eq:smoothed_loss}
\end{equation}
and consider the smoothed adversarial objective
\begin{equation}
\max_{\|\delta\|_2\le \varepsilon}\ g(y,\delta).
\label{eq:adv_obj_smooth}
\end{equation}
Randomized smoothing yields a stable gradient signal and enables principled gradient-based attacks.

To understand why the randomized smoothing helps, we observe in the following proposition that
Gaussian smoothing makes the loss landscape well-behaved in a \emph{dimension-free} way. This is a key technical tool for the universal-perturbation analysis in Section~\ref{subsec:uap}, where we will both (i) justify gradient-based optimization of universal objectives and (ii) derive a closed-form universal direction (UAP-PCA) from a smoothness surrogate.

\begin{proposition}[Dimension-free smoothness via Gaussian smoothing]
\label{prop:smoothness}
Assume the loss is bounded: for all $u\in\mathbb{R}^n$,
\begin{equation}
0\le \ell\bigl(f(u),x^\star\bigr)\le C.
\label{eq:bounded_loss}
\end{equation}
Define $\tilde g(u)\triangleq \mathbb{E}_{V\sim \mathcal{N}(0,\nu^2 I_n)}\bigl[\ell\bigl(f(u+V),x^\star\bigr)\bigr]$.
Then $\tilde g$ has a Lipschitz-continuous gradient with constant
\begin{equation}
\beta \ \le\ \frac{C}{\nu^2}\,\mathbb{E}\bigl[\,|Z^2-1|\,\bigr]\ <\ \frac{C}{\nu^2},
\qquad Z\sim\mathcal{N}(0,1),
\label{eq:beta_bound}
\end{equation}
i.e., $\bigl\|\nabla \tilde g(u)-\nabla \tilde g(u')\bigr\|_2 \le \frac{C}{\nu^2}\|u-u'\|_2$ for all $u,u'$.
\end{proposition}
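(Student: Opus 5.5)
Writing $h(w)\triangleq \ell\bigl(f(w),x^\star\bigr)$, which is only measurable and bounded in $[0,C]$ by \eqref{eq:bounded_loss}, we represent $\tilde g$ as the convolution $\tilde g(u)=\int h(w)\,p_\nu(w-u)\,dw$ with the Gaussian density $p_\nu$, and move all derivatives onto the kernel (Stein's identity / differentiation under the integral). Because $h$ is bounded and every derivative of $p_\nu$ is integrable, dominated convergence gives
\begin{equation*}
\nabla\tilde g(u)=\frac{1}{\nu^2}\mathbb{E}\bigl[h(u+V)\,V\bigr],\qquad
\nabla^2\tilde g(u)=\frac{1}{\nu^4}\mathbb{E}\bigl[h(u+V)\,(VV^\top-\nu^2 I_n)\bigr].
\end{equation*}
No smoothness of $f$ is needed, which matters because $f$ contains sign and syndrome operations.

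It then suffices to bound the operator norm of the Hessian uniformly in $u$; the mean value theorem along the segment $[u,u']$ yields the Lipschitz bound on $\nabla\tilde g$. Fix a unit vector $e$. The Hessian is symmetric, so its operator norm is $\sup_{\|e\|=1}|e^\top\nabla^2\tilde g(u)e|$. With $V=\nu W$ and $W\sim\mathcal N(0,I_n)$,
\begin{equation*}
e^\top\nabla^2\tilde g(u)\,e=\frac{1}{\nu^2}\mathbb{E}\bigl[h(u+\nu W)\,\bigl((e^\top W)^2-1\bigr)\bigr].
\end{equation*}
The random variable $Z=e^\top W$ is standard normal for every unit $e$, and this is exactly where the bound becomes dimension-free. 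Since $0\le h\le C$, the expectation is at most $C\,\mathbb{E}[|Z^2-1|]$ in absolute value. A sharper route is to use $\mathbb{E}[Z^2-1]=0$ and $h\in[0,C]$, which gives the bound $C\,\mathbb{E}[(Z^2-1)_+]=\tfrac{C}{2}\mathbb{E}|Z^2-1|$. Either way this proves $\beta\le \frac{C}{\nu^2}\mathbb{E}|Z^2-1|$.

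The remaining step is to verify $\mathbb{E}|Z^2-1|<1$, which should be a short explicit computation. Since $\mathbb{E}[Z^2-1]=0$, we have $\mathbb{E}|Z^2-1|=2\,\mathbb{E}[(1-Z^2)\mathbf 1_{|Z|<1}]$. Using $\int_{-1}^1 z^2\varphi(z)\,dz=(2\Phi(1)-1)-2\varphi(1)$, this equals $4\varphi(1)=\tfrac{4}{\sqrt{2\pi}}e^{-1/2}\approx 0.968<1$.

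The main obstacle is the rigorous justification of differentiating twice under the integral for a merely bounded measurable $h$. I would handle it by writing the derivatives of $p_\nu(w-u)$ explicitly and dominating them locally uniformly in $u$ by an integrable polynomial-times-Gaussian envelope. A related subtlety is that the statement bounds $\beta$ by $\frac{C}{\nu^2}$ times $\mathbb{E}|Z^2-1|$ and not by something involving the dimension $n$. This is handled by testing the Hessian only in one direction $e$, so only the projection $Z=e^\top W$ enters, and never bounding $\|VV^\top\|$ directly, which would scale with $n$.
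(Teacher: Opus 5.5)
Your proposal is correct and follows essentially the same route as the paper: Stein-type formulas for $\nabla\tilde g$ and $\nabla^2\tilde g$, then a quadratic-form bound in a single unit direction that reduces everything to the scalar $Z=e^\top W\sim\mathcal{N}(0,1)$, then a uniform Hessian bound that yields the Lipschitz gradient. Your extras go slightly beyond the paper, which only cites Stein's lemma and quotes $\approx 0.968$: the dominated-convergence justification, the sharper factor $\tfrac{1}{2}$ obtained by centering $h$ at $C/2$, and the closed form $\mathbb{E}|Z^2-1|=4\varphi(1)\approx 0.968$, with $\varphi$ the standard normal density.
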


\noindent
\begin{proof}
We defer the proof to the Appendix~\ref{app:proofs}. 
\end{proof}

\subsection{Sample-wise adversarial attacks}
\label{subsec:sample_attacks}

We consider two standard first-order attacks on the smoothed objective~\eqref{eq:adv_obj_smooth}. Both attacks estimate $\nabla_\delta g(y,\delta)$ using Monte Carlo samples $V_1,\dots,V_M\sim\mathcal{N}\bigl(0,\nu^2 I_n\bigr)$:
\begin{equation}
\nabla_{\delta} g(y,\delta)\approx \frac{1}{M}\sum_{m=1}^M \nabla_{\delta}\,\ell\Bigl(f\bigl(y+\delta+V_m\bigr),x^\star\Bigr).
\label{eq:mc_grad}
\end{equation}

\paragraph{PGD Attack}
Projected Gradient Descent (PGD)~\cite{madry2017towards} performs iterative projected gradient ascent on~\eqref{eq:adv_obj_smooth}:
\begin{equation}
\delta^{(t+1)}=\Pi_{\varepsilon}\Bigl(\delta^{(t)}+\eta\,\nabla_{\delta} g\bigl(y,\delta^{(t)}\bigr)\Bigr),
\qquad t=0,\dots,T-1,
\label{eq:pgd}
\end{equation}
where $\Pi_{\varepsilon}$ denotes Euclidean projection onto $\{\delta:\|\delta\|_2\le \varepsilon\}$. PGD refines $\delta$ over multiple steps and typically yields stronger attacks than a single gradient step.

\paragraph{FGM Attack}
The Fast Gradient Method (FGM)~\cite{goodfellow2015} is a single-step attack that linearizes $g(y,\delta)$ at $\delta=0$ and selects the maximizer of the linear approximation under the $\ell_2$ constraint:
\begin{equation}
\delta_{\mathrm{FGM}}=\varepsilon\,\frac{\nabla_{\delta} g(y,0)}{\bigl\|\nabla_{\delta} g(y,0)\bigr\|_2}.
\label{eq:fgm}
\end{equation}
FGM is computationally efficient and serves as a baseline for first-order vulnerability.

\subsection{Universal adversarial perturbations}
\label{subsec:uap}

Universal adversarial perturbations (UAPs)~\cite{moosavi2017universal} seek a single perturbation that degrades performance across many channel outputs. Let $\mathcal{D}$ denote the distribution of $(Y,X^\star)$ induced by random messages and channel noise. Define the population and empirical universal objectives
\begin{equation}
F(\delta)\triangleq \mathbb{E}_{(Y,X^\star)\sim\mathcal{D}}\bigl[g(Y,\delta)\bigr],
\qquad
\widehat F_N(\delta)\triangleq \frac{1}{N}\sum_{i=1}^N g\bigl(Y_i,\delta\bigr),
\label{eq:uap_obj}
\end{equation}
where $(Y_i,X_i^\star)_{i=1}^N$ are i.i.d.\ samples from $\mathcal{D}$.

\paragraph{UAP-Grad}
\label{subsubsec:uap_grad}
Our first universal attack, \emph{UAP-Grad}, directly applies projected gradient ascent to the empirical objective $\widehat F_N(\delta)$:
\begin{equation}
\delta^{(t+1)}=\Pi_{\varepsilon}\Bigl(\delta^{(t)}+\eta\cdot\frac{1}{N}\sum_{i=1}^N \nabla_{\delta} g\bigl(Y_i,\delta^{(t)}\bigr)\Bigr).
\label{eq:uap_grad}
\end{equation}
UAP-Grad is flexible and powerful but can be computationally demanding due to repeated gradient aggregation across many samples.

\paragraph{UAP-PCA}\label{subsubsec:uap_pca}

Our second approach, \emph{UAP-PCA}, provides a closed-form universal direction motivated by smoothness. For each sample $i$, define the smoothed per-sample function
\[
\tilde g_i(u)\triangleq \mathbb{E}_{V\sim \mathcal{N}(0,\nu^2 I_n)}\Bigl[\ell\bigl(f(u+V),X_i^\star\bigr)\Bigr],
\]
and its gradient at the sample input
\begin{equation}
q_i \triangleq \nabla \tilde g_i(Y_i)\in\mathbb{R}^n.
\label{eq:qi_def}
\end{equation}
UAP-PCA computes the universal direction from the empirical second-moment matrix
\begin{equation}
\widehat\Sigma_q \triangleq \frac{1}{N}\sum_{i=1}^N q_i q_i^\top.
\label{eq:sigmahat}
\end{equation}

Note that Proposition~\ref{prop:smoothness} implies $\beta$-smoothness for Gaussian-smoothed losses under boundedness. For UAP-PCA, we further exploit smoothness to build a quadratic lower bound in a fixed direction, then optimize that surrogate exactly. This yields an \emph{exact PCA} characterization.

\begin{proposition}
\label{prop:pca_characterization}
Assume every $\tilde g_i$ is $\beta$-smooth. For a unit direction $\delta\in\mathbb{R}^n$ with $\|\delta\|_2=1$, smoothness implies that for all $\alpha\in\mathbb{R}$,
\begin{equation}
\tilde g_i\bigl(Y_i+\alpha \delta\bigr)\ \ge\ \tilde g_i(Y_i)+\alpha\,\delta^\top q_i-\frac{\beta}{2}\alpha^2.
\label{eq:smoothness_lower_bound}
\end{equation}
Consider the surrogate maximization problem
\begin{equation}
\max_{\delta\in\mathbb{R}^n:\: \|\delta\|_2= 1}\ \frac{1}{N}\sum_{i=1}^N\ \max_{\alpha\in\mathbb{R}}
\Bigl(\alpha\,\delta^\top q_i-\frac{\beta}{2}\alpha^2\Bigr).
\label{eq:surrogate_opt}
\end{equation}
Then any maximizer $\delta^\star$ of~\eqref{eq:surrogate_opt} is a top eigenvector of $\widehat\Sigma_q$ defined in~\eqref{eq:sigmahat}.
\end{proposition}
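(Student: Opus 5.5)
The plan has two parts: derive the quadratic lower bound \eqref{eq:smoothness_lower_bound} from $\beta$-smoothness, then solve the inner maximization in \eqref{eq:surrogate_opt} in closed form, which reduces the outer problem to a Rayleigh quotient.

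\emph{Lower bound.} Take any $\beta$-smooth $\tilde g_i$ (for instance via Proposition~\ref{prop:smoothness}, with $\beta\le C/\nu^2$), and set $h(\alpha)=\tilde g_i(Y_i+\alpha\delta)$. By the fundamental theorem of calculus,
\[
h(\alpha)-h(0)-\alpha\,\delta^\top q_i=\int_0^1 \alpha\,\delta^\top\bigl(\nabla\tilde g_i(Y_i+t\alpha\delta)-\nabla\tilde g_i(Y_i)\bigr)\,dt .
\]
By Cauchy--Schwarz, the Lipschitz property of the gradient, and $\|\delta\|_2=1$, the integrand is bounded in absolute value by $\beta t\alpha^2$. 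Integrating over $t\in[0,1]$ gives the two-sided estimate $|h(\alpha)-h(0)-\alpha\,\delta^\top q_i|\le \frac{\beta}{2}\alpha^2$, whose lower half is \eqref{eq:smoothness_lower_bound}. This holds for every real $\alpha$, including negative ones, which is why the surrogate may optimize over all of $\mathbb{R}$.

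\emph{Inner maximization.} For fixed $\delta$ and $i$, the map $\alpha\mapsto \alpha\,\delta^\top q_i-\frac{\beta}{2}\alpha^2$ is a strictly concave quadratic (assuming $\beta>0$). It is maximized at $\alpha_i^\star=\delta^\top q_i/\beta$, and the maximum value is $(\delta^\top q_i)^2/(2\beta)$. Substituting this into \eqref{eq:surrogate_opt}, the objective becomes
\[
\frac{1}{2\beta}\cdot\frac1N\sum_{i=1}^N (\delta^\top q_i)^2=\frac{1}{2\beta}\,\delta^\top\widehat\Sigma_q\,\delta .
\]
The key point is that optimizing the step size separately for each sample squares the directional derivative. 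This removes sign cancellations across samples and turns a linear objective into a quadratic form in $\delta$.

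\emph{Rayleigh quotient.} Since $\widehat\Sigma_q$ is symmetric positive semidefinite and $1/(2\beta)>0$ is a constant, maximizing over the unit sphere is a Rayleigh-quotient problem. Diagonalize $\widehat\Sigma_q=\sum_j\lambda_j u_ju_j^\top$ with $\lambda_1\ge\cdots\ge\lambda_n\ge 0$ and an orthonormal eigenbasis $\{u_j\}$. Then $\delta^\top\widehat\Sigma_q\delta=\sum_j\lambda_j(u_j^\top\delta)^2\le\lambda_1$, since $\sum_j(u_j^\top\delta)^2=1$. Equality holds if and only if all weight $(u_j^\top\delta)^2$ is placed on indices with $\lambda_j=\lambda_1$, that is, if and only if $\delta$ lies in the top eigenspace. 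Hence every maximizer $\delta^\star$ is a top eigenvector, and the case of a repeated top eigenvalue is covered because every unit vector in that eigenspace qualifies.

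The main obstacle is small: one must check that the inner supremum over $\alpha$ is finite and attained, which requires $\beta>0$. If $\beta=0$ were allowed, the inner maximum would be infinite whenever $\delta^\top q_i\neq 0$, so the argument should explicitly assume $\beta>0$. Beyond that, the proof is a direct computation.
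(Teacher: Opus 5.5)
Your proposal is correct and follows the paper's proof. Both arguments solve the concave inner quadratic at $\alpha_i^\star=\delta^\top q_i/\beta$ to get the value $(\delta^\top q_i)^2/(2\beta)$, reduce the outer problem to maximizing the Rayleigh quotient $\delta^\top\widehat\Sigma_q\delta$ over the unit sphere, and identify the maximizers with the top eigenspace; your derivation of \eqref{eq:smoothness_lower_bound} via the fundamental theorem of calculus, your explicit equality-case analysis, and your remark that $\beta>0$ is needed simply fill in details the paper leaves implicit.
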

\noindent
\begin{proof}
We defer the proof to the Appendix~\ref{app:proofs}.
\end{proof}

\subsubsection*{Finite-sample guarantees for UAP evaluation and UAP-PCA}
We next state a concentration theorem that (i) quantifies how many samples suffice to estimate the universal objective $\widehat F_N(\delta)$ for a fixed $\delta$, and (ii) guarantees stability of the empirical top principal component $\hat u_1$ of $\widehat\Sigma_q$.
\medskip
\noindent
\emph{Assumptions.} We assume (a) bounded loss $0\le \ell(\cdot)\le C$, (b) bounded gradients $\|q_i\|_2\le L$ almost surely, and (c) an eigengap $\Delta \triangleq \lambda_1(\Sigma_q)-\lambda_2(\Sigma_q)>0$, where $\Sigma_q\triangleq \mathbb{E}\bigl[q q^\top\bigr]$ and $q$ denotes a generic copy of $q_i$.

\begin{theorem}[Concentration for UAP objective and UAP-PCA]
\label{thm:uap_concentration}
Fix any $\eta\in(0,1)$. Given assumptions (a),(b),(c), then with probability at least $1-\eta$, the following hold:
\begin{align}
\bigl|\widehat F_N(\delta)-F(\delta)\bigr|
&\le
C\sqrt{\frac{\log\bigl(4/\eta\bigr)}{2N}}
\qquad \text{for any fixed $\delta$,}
\label{eq:conc_uap_obj}\\
\bigl\|\widehat\Sigma_q-\Sigma_q\bigr\|_{\mathrm{op}}
&\le
4\sqrt{2}\,L^2\sqrt{\frac{\log\bigl(4n/\eta\bigr)}{N}},
\label{eq:conc_sigma}\\
\sin\angle\bigl(\hat u_1,u_1\bigr)
&\le
\frac{4\sqrt{2}\,L^2}{\Delta}\sqrt{\frac{\log\bigl(4n/\eta\bigr)}{N}},
\label{eq:conc_pca}
\end{align}
where $u_1$ is the top eigenvector of $\Sigma_q$, $\hat u_1$ is the top eigenvector of $\widehat\Sigma_q$ (with sign chosen to maximize $|\langle \hat u_1,u_1\rangle|$), and $\angle(\cdot,\cdot)$ denotes the principal angle between one-dimensional subspaces.
\end{theorem}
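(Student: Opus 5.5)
The plan is to prove the three bounds separately, each holding with probability at least $1-\eta/2$, and then combine them with a union bound. Bound \eqref{eq:conc_pca} is a deterministic consequence of \eqref{eq:conc_sigma}, so it needs no probability budget of its own.

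\emph{Step 1 (objective concentration).} Fix $\delta$. The summands $g(Y_i,\delta)$ are i.i.d. By assumption (a) each takes values in $[0,C]$, since $g$ is an expectation of a loss bounded by $C$. Their mean is $F(\delta)$ by \eqref{eq:uap_obj}. Hoeffding's inequality then gives
\[
\Pr\bigl(|\widehat F_N(\delta)-F(\delta)|>t\bigr)\le 2\exp(-2Nt^2/C^2).
\]
Setting the right-hand side equal to $\eta/2$ yields $t=C\sqrt{\log(4/\eta)/(2N)}$, which is exactly \eqref{eq:conc_uap_obj}. A subtle point is that the $q_i$ must be computed from the same i.i.d. samples; this is harmless because we only union-bound over two events.

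\emph{Step 2 (covariance concentration).} Write
\[
\widehat\Sigma_q-\Sigma_q=\frac1N\sum_i X_i,\qquad X_i=q_iq_i^\top-\Sigma_q .
\]
These are i.i.d., mean-zero, symmetric $n\times n$ matrices. Assumption (b) gives $\|q_iq_i^\top\|_{\mathrm{op}}\le L^2$ and $\|\Sigma_q\|_{\mathrm{op}}\le L^2$. Since $q_iq_i^\top$ and $\Sigma_q$ are both PSD, in fact $\|X_i\|_{\mathrm{op}}\le L^2$, although the crude bound $2L^2$ also suffices.

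I will apply the matrix Hoeffding inequality (Tropp) with $X_i^2\preceq A_i^2$ for $A_i=cL^2 I$:
\[
\Pr\Bigl(\bigl\|\textstyle\sum_i X_i\bigr\|_{\mathrm{op}}\ge s\Bigr)\le 2n\exp\bigl(-s^2/(8\sigma^2)\bigr),\qquad \sigma^2=\Bigl\|\textstyle\sum_i A_i^2\Bigr\|=Nc^2L^4 .
\]
Setting this equal to $\eta/2$ gives
\[
s=\sqrt{8Nc^2L^4\log(4n/\eta)} .
\]
Dividing by $N$ with $c=2$ gives $4\sqrt2\,L^2\sqrt{\log(4n/\eta)/N}$, exactly the constant in \eqref{eq:conc_sigma}. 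That match suggests this is the intended route: matrix Hoeffding with the crude bound $\|X_i\|\le 2L^2$.

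This constant-matching is the step I expect to be the main obstacle, since different versions of matrix Hoeffding and Bernstein carry different constants. If needed, I would fall back on the version with the factor $8$ in the exponent and the $2n$ dimensional prefactor.

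\emph{Step 3 (eigenvector perturbation).} On the event of Step 2, apply the Davis--Kahan $\sin\theta$ theorem in the Yu--Wang--Samworth form for the top eigenvector:
\[
\sin\angle(\hat u_1,u_1)\le \frac{2\|\widehat\Sigma_q-\Sigma_q\|_{\mathrm{op}}}{\Delta}.
\]
Alternatively, use the classical form $\|E\|/(\Delta-\|E\|)$, or the variant $\|E\|/\Delta$ that uses the population gap together with the fact that the error enters only through $E u_1$.

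To obtain the stated constant without the factor $2$, I would use the original Davis--Kahan bound with the separation between $\lambda_1(\widehat\Sigma_q)$ and $\lambda_2(\Sigma_q)$. By Weyl's inequality, $\lambda_1(\widehat\Sigma_q)\ge\lambda_1(\Sigma_q)-\|E\|$, so this separation is at least $\Delta-\|E\|$. The resulting bound $\|E\|/(\Delta-\|E\|)$ exceeds $\|E\|/\Delta$ by a factor close to $1$ when $\|E\|$ is small, so some slack in constants must be absorbed, possibly from the factor-$2$ looseness in Step 2. If exact constants prove problematic, I would note that the bound holds up to an absolute constant. Finally, plugging \eqref{eq:conc_sigma} in gives \eqref{eq:conc_pca}. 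Since Steps 1 and 2 each fail with probability at most $\eta/2$, all three bounds hold simultaneously with probability at least $1-\eta$.
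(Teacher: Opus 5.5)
Your architecture is the paper's: Hoeffding for \eqref{eq:conc_uap_obj}, matrix Hoeffding with $R=2L^2$ for \eqref{eq:conc_sigma}, Davis--Kahan for \eqref{eq:conc_pca}, and a union bound over the two probabilistic events. Steps 1 and 2 match the paper essentially line by line. The gap is Step 3, which you leave unresolved.

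Once you set $c=2$ in Step 2, all you know is $\|E\|_{\mathrm{op}}\le\varepsilon_0\triangleq 4\sqrt2\,L^2\sqrt{\log(4n/\eta)/N}$. The valid Davis--Kahan forms then give only $2\varepsilon_0/\Delta$ (Yu--Wang--Samworth) or $\varepsilon_0/(\Delta-\varepsilon_0)$ (classical plus Weyl). Neither is the stated bound, and retreating to ``up to an absolute constant'' would prove a weaker theorem. The third option you list, $\|E\|_{\mathrm{op}}/\Delta$ with the population gap (even with $\|Eu_1\|_2$ in the numerator), is false. Take $\Sigma_q=\mathrm{diag}(1,0)$ and
\[
E=\begin{pmatrix}-0.18 & 0.24\\ 0.24 & 0.18\end{pmatrix}.
\]
Then $\Delta=1$ and $\|E\|_{\mathrm{op}}=\|Eu_1\|_2=0.3$. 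The matrix $\Sigma_q+E$ is PSD with eigenvalues $0.9,0.1$ and top eigenvector proportional to $(3,1)$, so $\sin\angle(\hat u_1,u_1)=1/\sqrt{10}>0.3$. This is exactly the form the paper invokes in its own Step 3, together with $R=2L^2$. So your distrust of the constant is well founded: the paper's argument does not, as written, justify \eqref{eq:conc_pca}.

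The stated constants are nonetheless correct, and the repair is the one you gesture at. It has to be made inside Step 2, not after it. Use your own observation that $q_iq_i^\top$ and $\Sigma_q$ are PSD with operator norm at most $L^2$, so $-L^2I\preceq X_i\preceq L^2I$, and run matrix Hoeffding with $R=L^2$. With probability at least $1-\eta/2$ this gives $\|E\|_{\mathrm{op}}\le\varepsilon_0/2$, which implies \eqref{eq:conc_sigma} with room to spare. On that same event, Yu--Wang--Samworth yields $\sin\angle(\hat u_1,u_1)\le 2\|E\|_{\mathrm{op}}/\Delta\le\varepsilon_0/\Delta$. Your classical route also works: the claim is trivial if $\varepsilon_0\ge\Delta$, since the right-hand side is then at least $1$. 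Otherwise $\|E\|_{\mathrm{op}}\le\varepsilon_0/2<\Delta/2$, so $\|E\|_{\mathrm{op}}/(\Delta-\|E\|_{\mathrm{op}})\le(\varepsilon_0/2)/(\Delta/2)=\varepsilon_0/\Delta$. The union bound with Step 1 then finishes exactly as you planned.
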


\noindent
\begin{proof}
We defer the proof to the Appendix~\ref{app:proofs}.
\end{proof}

\section{Numerical Results}
\subsection{Experimental Setup}
\noindent\textbf{Datasets.} We evaluate the adversarial attack on standard error correction codes, specifically focusing on Polar and LDPC codes. Our evaluation spans various code configurations, including different code lengths ($n$) and code rates ($R = k/n$). Furthermore, we test across a range of Signal-to-Noise Ratios (SNRs), specifically targeting $E_b/N_0$ from 4 to 6 dB. We evaluate results on $10^6$ test samples for each code type.

\noindent\textbf{Evaluation Metrics.} Following existing works~\cite{choukroun2022error,parkcrossmpt}, we assess decoding performance using the standard metric: Frame Error Rate (FER), quantifying the fraction of codewords containing at least one bit error.

\noindent\textbf{Baselines.} We evaluate the adversarial attacks against a comprehensive set of decoding algorithms, categorized into: 1) Traditional Decoders: Min-Sum (MS)~\cite{fossorier1999reduced, richardson2008} and Sum-Product (SP)~\cite{gallager1962, kschischang2001factor} with maximum number of iterations equals 10. 2) Neural Decoders: State-of-the-art deep learning-based methods, specifically ECCT~\cite{choukroun2022error} and CrossMPT~\cite{parkcrossmpt}. For the neural decoders (ECCT, CrossMPT), we follow the settings specified in their publications to ensure fair comparison. Both models are implemented with a depth of $N=6$ layers and a hidden dimension of $d=128$.

\noindent\textbf{Adversarial Attack Methods.} We evaluate the decoders against both sample-level and universal adversarial attacks. For sample-level attacks, which target individual codewords, we apply Fast Gradient Method (FGM)~\cite{goodfellow2014explaining} and Projected Gradient Descent (PGD)~\cite{madry2017towards}. For universal attacks, which seek a single perturbation across the dataset, we choose Universal Adversarial Perturbation (UAP)~\cite{moosavi2017universal} and PCA-based Universal Adversarial Direction (UAP-PCA)~\cite{choi2022universal}. For all attack methods, we constrain the perturbation energy using a relative $L_2$-norm bound: $||\delta||_2 \leq \alpha||y||_2$. Additionally, we report results under random noise with the same energy constraint to demonstrate the impact of adversarial attack.

\subsection{Performance Drop after Adversarial Attack}
In our experiments, we evaluate the robustness of two baseline AI decoders, CrossMPT and ECCT, against various attack methods, including FGM, PGD, UAP, and UAP-PCA. We compare the performance in terms of the inverse Frame Error Rate \textbf{(1/FER)} before and after applying attacks. The same energy constraint of $\alpha=0.001$ is applied across all attack methods. We conduct experiments on different Polar and LDPC codes with $E_b/N_0$ ranging from 4 to 6 dB. As shown in Table~\ref{tab:attack_comparison_vertical}, although the well-trained AI decoders initially exhibit superior decoding performance, adversarial perturbations with merely $\frac{1}{1000}$ of the signal energy cause a significant degradation. While the decoders remain relatively robust to random noise, we observe a sharp decline in performance under adversarial attacks, especially at high SNRs. For decoding LDPC(49,24) at 6dB, the \textbf{(1/FER)} decreases from $5.0 \times 10^4$ to $2.6 \times 10^1$ under the PGD attack, indicating a performance drop over 1000 times. This reveals the robustness cost inherent in AI decoders under adversarial attacks. 
\begin{table}[t]
    \centering
    \caption{Decoding Performance comparison using the \textbf{FER-Inverse (1/FER)} under different attack methods.}
    \label{tab:attack_comparison_vertical}
    \renewcommand{\arraystretch}{1.125} 
    \setlength{\tabcolsep}{3.5pt}
    \resizebox{0.95\textwidth}{!}{%
        \begin{tabular}{ccc cccc cccc}
        \toprule
        \multirow{3}{*}{\textbf{Type}} & \multirow{3}{*}{\textbf{Method}} & \multirow{3}{*}{\textbf{SNR}} & \multicolumn{4}{c}{\textbf{ECCT}} & \multicolumn{4}{c}{\textbf{CrossMPT}} \\
        \cmidrule(lr){4-7} \cmidrule(lr){8-11}
        
        & & & \multicolumn{2}{c}{\textbf{Polar}} & \multicolumn{2}{c}{\textbf{LDPC}} & \multicolumn{2}{c}{\textbf{Polar}} & \multicolumn{2}{c}{\textbf{LDPC}} \\
        \cmidrule(lr){4-5} \cmidrule(lr){6-7} \cmidrule(lr){8-9} \cmidrule(lr){10-11}
        
        & & & (64,48)$\uparrow$ & (128,64)$\uparrow$ & (49,24)$\uparrow$ & (121,60)$\uparrow$ & (64,48)$\uparrow$ & (128,64)$\uparrow$ & (49,24)$\uparrow$ & (121,60)$\uparrow$ \\
        \midrule

        \multirow{6}{*}{\rotatebox{90}{\textbf{w/o attack}}} 
        & \multirow{3}{*}{\textbf{Clean}} 
          & 4 & $2.1\times10^{1}$ & $8.5\times10^{0}$ & $3.7\times10^{1}$ & $8.6\times10^{0}$ & $3.4\times10^{1}$ & $2.5\times10^{1}$ & $8.9\times10^{1}$ & $2.1\times10^{1}$ \\
        & & 5 & $1.6\times10^{2}$ & $7.4\times10^{1}$ & $3.5\times10^{2}$ & $1.4\times10^{2}$ & $2.6\times10^{2}$ & $2.4\times10^{2}$ & $1.5\times10^{3}$ & $6.1\times10^{2}$\\
        & & 6 & $2.2\times10^{3}$ & $1.4\times10^{3}$ & $9.1\times10^{3}$ & $1.1\times10^{4}$ & $4.8\times10^{3}$ & $1.0\times10^{4}$ & $5.0\times10^{4}$ & $1.7\times10^{4}$ \\
        \cmidrule(lr){2-11}
        & \multirow{3}{*}{\textbf{Random}} 
          & 4 & $2.0\times10^{1}$ & $8.1\times10^{0}$ & $3.5\times10^{1}$ & $8.2\times10^{0}$ & $3.3\times10^{1}$ & $3.0\times10^{1}$ & $7.7\times10^{1}$ & $1.9\times10^{1}$ \\
        & & 5 & $1.3\times10^{2}$ & $5.9\times10^{1}$ & $2.4\times10^{2}$ & $9.3\times10^{1}$ & $2.2\times10^{2}$ & $1.7\times10^{2}$ & $1.2\times10^{3}$ & $2.7\times10^{2}$ \\
        & & 6 & $1.7\times10^{3}$ & $6.1\times10^{2}$ & $2.0\times10^{3}$ & $3.6\times10^{3}$ & $2.8\times10^{3}$ & $4.5\times10^{3}$ & $9.8\times10^{3}$ & $3.8\times10^{3}$ \\
        \midrule
        
        \multirow{6}{*}{\rotatebox{90}{\textbf{Universal}}} 
        & \multirow{3}{*}{\textbf{UAP}} 
          & 4 & $1.7\times10^{1}$ & $5.4\times10^{0}$ & $2.8\times10^{1}$ & $5.1\times10^{0}$ & $2.8\times10^{1}$ & $1.9\times10^{1}$ & $4.0\times10^{1}$ & $1.6\times10^{1}$\\
        & & 5 & $5.3\times10^{1}$ & $3.4\times10^{1}$ & $1.1\times10^{2}$ & $3.2\times10^{1}$ & $1.0\times10^{2}$ & $7.3\times10^{1}$ & $2.0\times10^{2}$ & $5.5\times10^{1}$\\
        & & 6 & $3.6\times10^{2}$ & $9.8\times10^{1}$ & $2.7\times10^{2}$ & $2.0\times10^{2}$ & $5.7\times10^{2}$ & $5.7\times10^{2}$ & $6.4\times10^{2}$ & $5.1\times10^{2}$\\
        \cmidrule(lr){2-11}
        & \multirow{3}{*}{\textbf{UAP-PCA}} 
          & 4 & $1.5\times10^{1}$ & $5.0\times10^{0}$ & $2.4\times10^{1}$ & $4.7\times10^{0}$ & $2.4\times10^{1}$ & $1.3\times10^{1}$ & $3.5\times10^{1}$ & $1.4\times10^{1}$ \\
        & & 5 & $4.5\times10^{1}$ & $2.0\times10^{1}$ & $9.8\times10^{1}$ & $2.1\times10^{1}$ & $6.9\times10^{1}$ & $5.6\times10^{1}$ & $1.6\times10^{2}$& $4.5\times10^{1}$\\
        & & 6 & $1.5\times10^{2}$ & $5.8\times10^{1}$ & $1.6\times10^{1}$ & $1.1\times10^{2}$ & $2.4\times10^{2}$ & $1.9\times10^{2}$ & $3.4\times10^{2}$ & $1.9\times10^{2}$\\
        \midrule

        \multirow{6}{*}{\rotatebox{90}{\textbf{Sample-level}}} 
        & \multirow{3}{*}{\textbf{FGM}} 
          & 4 & $1.4\times10^{1}$ & $4.9\times10^{0}$ & $2.0\times10^{1}$ & $4.1\times10^{0}$ & $2.2\times10^{1}$ & $8.4\times10^{0}$ & $2.9\times10^{1}$ & $1.3\times10^{1}$ \\
        & & 5 & $3.2\times10^{1}$ & $1.4\times10^{1}$ & $5.5\times10^{1}$ & $1.5\times10^{1}$ & $3.8\times10^{1}$ & $4.7\times10^{1}$ & $1.2\times10^{2}$ & $3.1\times10^{1}$\\
        & & 6 & $7.1\times10^{1}$ & $2.8\times10^{1}$ & $9.3\times10^{1}$ & $6.3\times10^{1}$ & $7.5\times10^{1}$ & $1.2\times10^{2}$& $1.7\times10^{2}$ & $1.1\times10^{2}$\\
        \cmidrule(lr){2-11}
        & \multirow{3}{*}{\textbf{PGD}} 
          & 4 & $7.8\times10^{0}$ & $3.0\times10^{0}$ & $6.0\times10^{0}$ & $3.3\times10^{0}$ & $9.0\times10^{0}$ & $3.5\times10^{0}$ & $1.0\times10^{1}$ & $6.8\times10^{0}$\\
        & & 5 & $1.8\times10^{1}$ & $5.4\times10^{0}$ & $9.3\times10^{0}$ & $8.3\times10^{0}$ & $1.7\times10^{1}$ & $6.9\times10^{0}$ & $1.5\times10^{1}$ & $1.8\times10^{1}$\\
        & & 6 & $3.3\times10^{1}$ & $8.3\times10^{0}$ & $1.5\times10^{1}$ & $1.7\times10^{1}$ & $3.2\times10^{1}$ & $1.9\times10^{1}$ & $2.6\times10^{1}$ & $3.0\times10^{1}$\\
        
        \bottomrule
        \end{tabular}
    }
\end{table}

\subsection{Ablation Study: Change of Energy Constraint}
This ablation study investigates how perturbation strength affects performance by varying the energy constraint $\alpha$ ($||\delta||_2 \leq \alpha||y||_2$) between $10^{-4}$ and $10^{-1}$. Evaluating ECCT on Polar(128,64) across different SNRs, Figure~\ref{fig:radius} illustrates that degradation becomes progressively more severe with larger energy constraints, an effect that is especially pronounced at high SNRs. Detailed Results for LDPC codes refer to Appendix~\ref{appendix:ablation}.
\begin{figure}[htbp]
    \centering
    \includegraphics[width=0.96\linewidth]{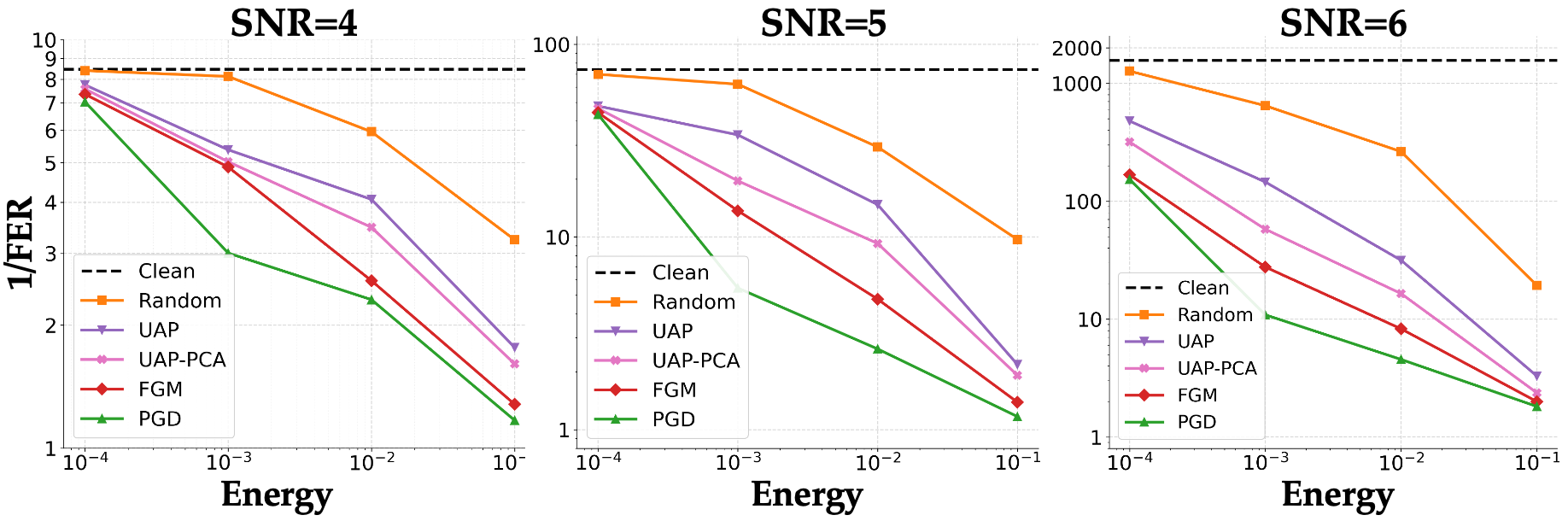}
    \caption{Comparison of \textbf{FER-Inverse (1/FER)} under various energy constraints for ECCT on Polar(128,64).}
    \label{fig:radius}
\end{figure}

\subsection{Transferability of Adversarial Perturbation}
We further evaluate the transferability of adversarial perturbations and assess the robustness of traditional decoders. We generate optimal UAP and PGD perturbations based on the ECCT model for the LDPC(121,60) code, under an energy constraint of $\alpha=0.001$. The decoding performance, measured by \textbf{1/FER}, is tested across AI decoders (ECCT, CrossMPT) and traditional decoders (Min-Sum, Sum-Product) under four conditions: clean, random noise, and the transferred UAP and PGD perturbations. As shown in Table~\ref{tab:transfer_origin}, while AI decoders achieve superior performance on clean data, they exhibit significant fragility under attack. Notably, the perturbations crafted for ECCT successfully transfer to CrossMPT, causing a drastic performance degradation (e.g., from $1.7 \times 10^4$ to $5.5 \times 10^1$ at 6 dB under PGD). In contrast, traditional decoders demonstrate robustness, maintaining consistent performance levels even with adversarial distribution shifts. Additional Results, including transferability to diffusion-based DDECC~\cite{choukroun2023ddoecc} on both LDPC and Polar codes, refer to Appendix~\ref{appendix:transfer}.
\begin{table}[H]
    \centering
    \caption{Evaluation of attack transferability measured by \textbf{FER-Inverse (1/FER)} across decoders on LDPC(121,60).}
    \renewcommand{\arraystretch}{1.125}
    \setlength{\tabcolsep}{4pt} 
    
    \resizebox{0.6\textwidth}{!}{%
        \begin{tabular}{c c l c c c}
            \toprule
            \textbf{Type} & \textbf{Decoder} & \textbf{Attack} & \textbf{4 dB} & \textbf{5 dB} & \textbf{6 dB} \\
            \midrule
            
            \multirow{8}{*}{\rotatebox{90}{\textbf{AI Decoder}}} 
            
            & \multirow{4}{*}{ECCT}     
              & Clean & $8.6 \times 10^{0}$ & $1.4 \times 10^{2}$ & $1.1 \times 10^{4}$ \\
            & & Random & $8.2 \times 10^{0}$ & $9.3 \times 10^{1}$ & $3.6 \times 10^{3}$ \\
            & & \textbf{UAP}   & $5.1 \times 10^{0}$ & $3.2 \times 10^{1}$ & $8.0 \times 10^{2}$ \\
            & & \textbf{PGD}   & $3.3 \times 10^{0}$ & $8.3 \times 10^{0}$ & $1.7 \times 10^{1}$ \\ 
            \cmidrule(l){2-6}
            
            & \multirow{4}{*}{CrossMPT} 
              & Clean & $2.1 \times 10^{1}$ & $6.1 \times 10^{2}$ & $1.7 \times 10^{4}$ \\
            & & Random & $1.9 \times 10^{1}$ & $2.7 \times 10^{2}$ & $3.8 \times 10^{3}$ \\
            & & \textbf{UAP}   & $1.7 \times 10^{1}$ & $6.7 \times 10^{1}$ & $8.8 \times 10^{2}$ \\
            & & \textbf{PGD}   & $9.2 \times 10^{0}$ & $3.2 \times 10^{1}$ & $5.5 \times 10^{1}$ \\

            \midrule

            \multirow{8}{*}{\rotatebox{90}{\textbf{Traditional Decoder}}} 
            
            & \multirow{4}{*}{Min-sum}       
              & Clean & $5.4 \times 10^{0}$ & $5.1 \times 10^{1}$ & $2.0 \times 10^{3}$ \\
            & & Random & $5.4 \times 10^{0}$ & $5.1 \times 10^{1}$ & $1.9 \times 10^{3}$ \\ 
            & & \textbf{UAP}   & $5.3 \times 10^{0}$ & $5.0 \times 10^{1}$ & $1.7 \times 10^{3}$ \\
            & & \textbf{PGD}   & $5.0 \times 10^{0}$ & $4.3 \times 10^{1}$ & $1.3 \times 10^{3}$ \\
            \cmidrule(l){2-6}
            
            & \multirow{4}{*}{Sum-product}  
              & Clean & $1.6 \times 10^{1}$ & $4.9 \times 10^{2}$ & $1.5 \times 10^{4}$ \\
            & & Random & $1.6 \times 10^{1}$ & $4.8 \times 10^{2}$ & $1.4 \times 10^{4}$ \\
            & & \textbf{UAP}   & $1.5 \times 10^{1}$ & $4.1 \times 10^{2}$ & $1.1 \times 10^{4}$ \\
            & & \textbf{PGD}   & $1.3 \times 10^{1}$ & $3.3 \times 10^{2}$ & $1.0 \times 10^{4}$ \\
            
            \bottomrule
        \end{tabular}
    }
    \label{tab:transfer_origin}
\end{table}

\section{Conclusion and Limitations}
This work indicates that although recent AI-based decoders can outperform traditional baselines under nominal AWGN conditions, their performance may be more sensitive to small, worst-case input perturbations crafted for these models. Across both input-dependent and universal attacks, gains under the standard channel model do not necessarily persist under adversarial perturbations, while BP decoders appear less affected by AI-decoder-targeted perturbations. Future work should explore robustness-aware designs and broader channel models.

\bibliographystyle{unsrt}
\bibliography{main}

\appendices
\onecolumn
\section{Proofs}
\label{app:proofs}

\subsection{Proof of Proposition~\ref{prop:smoothness}}
\label{app:proof_smoothness}

To prove Proposition~\ref{prop:smoothness}, let $V\sim\mathcal{N}\bigl(0,\nu^2 I_n\bigr)$ and define
\[
\tilde g(u)\triangleq \mathbb{E}\Bigl[\ell\bigl(f(u+V),x^\star\bigr)\Bigr].
\]
The application of Stein's lemma shows that
\begin{equation}
\nabla \tilde g(u)=\frac{1}{\nu^2}\,\mathbb{E}\Bigl[\ell\bigl(f(u+V),x^\star\bigr)\,V\Bigr].
\label{eq:stein_grad}
\end{equation}
Considering the second-order differentiation and applying the second-order Stein identity will further yield:
\begin{equation}
\nabla^2 \tilde g(u)=\frac{1}{\nu^4}\,\mathbb{E}\Bigl[\ell\bigl(f(u+V),x^\star\bigr)\,\bigl(VV^\top-\nu^2 I_n\bigr)\Bigr].
\label{eq:stein_hess}
\end{equation}
Fix any unit vector $a\in\mathbb{R}^n$ with $\|a\|_2=1$. Taking the quadratic form of~\eqref{eq:stein_hess},
\begin{equation}
a^\top \nabla^2 \tilde g(u)\,a
=\frac{1}{\nu^4}\,\mathbb{E}\Bigl[\ell\bigl(f(u+V),x^\star\bigr)\,\Bigl(\bigl(a^\top V\bigr)^2-\nu^2\Bigr)\Bigr].
\label{eq:dir_second_deriv}
\end{equation}
By~\eqref{eq:bounded_loss}, $0\le \ell\le C$, hence
\begin{align}
\bigl|a^\top \nabla^2 \tilde g(u)\,a\bigr|
&\le \frac{C}{\nu^4}\,\mathbb{E}\Bigl[\Bigl|\bigl(a^\top V\bigr)^2-\nu^2\Bigr|\Bigr].
\label{eq:dir_bound_step1}
\end{align}
Since $a^\top V\sim \mathcal{N}\bigl(0,\nu^2\bigr)$, we can write $a^\top V=\nu Z$ with $Z\sim\mathcal{N}(0,1)$. Substituting into~\eqref{eq:dir_bound_step1} yields
\begin{align}
\bigl|a^\top \nabla^2 \tilde g(u)\,a\bigr|
&\le \frac{C}{\nu^4}\,\nu^2\,\mathbb{E}\bigl[|Z^2-1|\bigr]
= \frac{C}{\nu^2}\,\mathbb{E}\bigl[|Z^2-1|\bigr].
\label{eq:dir_bound_final}
\end{align}
Using the identity for symmetric matrices,
\[
\bigl\|\nabla^2 \tilde g(u)\bigr\|_{\mathrm{op}}
=
\sup_{\|a\|_2=1}\bigl|a^\top \nabla^2 \tilde g(u)\,a\bigr|,
\]
we obtain from~\eqref{eq:dir_bound_final} that
\[
\bigl\|\nabla^2 \tilde g(u)\bigr\|_{\mathrm{op}}
\le \frac{C}{\nu^2}\,\mathbb{E}\bigl[|Z^2-1|\bigr].
\]
A uniform operator-norm bound on the Hessian implies that $\nabla \tilde g$ is Lipschitz with constant $\beta$ given by the same bound, proving~\eqref{eq:beta_bound}. Finally, $\mathbb{E}\bigl[|Z^2-1|\bigr]<1$ (numerically $\approx 0.968$), which yields the strict inequality in~\eqref{eq:beta_bound}. \hfill $\blacksquare$

\subsection{Proof of Proposition~\ref{prop:pca_characterization}}
\label{app:proof_pca_char}

To prove Proposition~\ref{prop:pca_characterization}, consider and fix any unit vector $\delta$ with $\|\delta\|_2=1$. For every index $i$, define
\[
\psi_i(\alpha)\triangleq \alpha\,\delta^\top q_i - \frac{\beta}{2}\alpha^2.
\]
The inner maximization in~\eqref{eq:surrogate_opt} equals $\max_{\alpha\in\mathbb{R}}\psi_i(\alpha)$. Since $\psi_i$ is a concave quadratic, it has a unique maximizer given by the first-order necessary condition (FONC):
\[
\psi_i'(\alpha_i^*)=\delta^\top q_i - \beta \alpha_i^* = 0
\quad \Longrightarrow\quad
\alpha_i^\star=\frac{\delta^\top q_i}{\beta}.
\]
Substituting $\alpha_i^\star$,
\[
\max_{\alpha\in\mathbb{R}}\psi_i(\alpha)
=
\psi_i\bigl(\alpha_i^\star\bigr)
=
\frac{\bigl(\delta^\top q_i\bigr)^2}{2\beta}.
\]
Therefore, the objective in~\eqref{eq:surrogate_opt} is proportional to
\begin{align*}
\max_{\|\delta\|_2=1}\ \frac{1}{N}\sum_{i=1}^N \bigl(\delta^\top q_i\bigr)^2
&=
\max_{\|\delta\|_2=1}\ \delta^\top\Bigl(\frac{1}{N}\sum_{i=1}^N q_i q_i^\top\Bigr)\delta
=
\max_{\|\delta\|_2=1}\ \delta^\top \widehat\Sigma_q\,\delta.
\end{align*}
The maximizers of the Rayleigh quotient $\delta^\top \widehat\Sigma_q \delta$ over $\|\delta\|_2=1$ are precisely the top eigenvectors of $\widehat\Sigma_q$. \hfill $\blacksquare$

\subsection{Proof of Theorem~\ref{thm:uap_concentration}}
\label{app:proof_uap_conc}

To show the theorem, here we prove~\eqref{eq:conc_uap_obj}--\eqref{eq:conc_pca} and then apply a union bound.

\subsubsection{Proof of the objective concentration in~\eqref{eq:conc_uap_obj}}
Fix any perturbation $\delta$. By the bounded loss assumption $0\le \ell(\cdot)\le C$, we have for all $y$,
\[
0 \le g(y,\delta)=\mathbb{E}_{V}\bigl[\ell\bigl(f(y+V+\delta),x^\star\bigr)\bigr] \le C.
\]
Hence $Z_i\triangleq g\bigl(Y_i,\delta\bigr)$ are i.i.d.\ and lie in $[0,C]$ almost surely. Hoeffding's inequality implies
\[
\Pr\Bigl(\Bigl|\frac{1}{N}\sum_{i=1}^N Z_i - \mathbb{E}[Z_1]\Bigr|\ge t\Bigr)
\le 2\exp\Bigl(-\frac{2Nt^2}{C^2}\Bigr).
\]
Since $\frac{1}{N}\sum_{i=1}^N Z_i=\widehat F_N(\delta)$ and $\mathbb{E}[Z_1]=F(\delta)$,
\[
\Pr\bigl(\bigl|\widehat F_N(\delta)-F(\delta)\bigr|\ge t\bigr)
\le 2\exp\Bigl(-\frac{2Nt^2}{C^2}\Bigr).
\]
Setting the right-hand side to $\eta/2$ yields $t=C\sqrt{\log\bigl(4/\eta\bigr)/(2N)}$, proving~\eqref{eq:conc_uap_obj} with probability at least $1-\eta/2$.

\subsubsection{Matrix concentration: proof of~\eqref{eq:conc_sigma}}
Define $\Sigma_q=\mathbb{E}\bigl[q q^\top\bigr]$ and $\widehat\Sigma_q=\frac{1}{N}\sum_{i=1}^N q_i q_i^\top$, and let
\[
X_i \triangleq q_i q_i^\top - \Sigma_q.
\]
Then $\mathbb{E}[X_i]=0$ and $\widehat\Sigma_q-\Sigma_q=\frac{1}{N}\sum_{i=1}^N X_i$. By $\|q_i\|_2\le L$,
\[
\|q_i q_i^\top\|_{\mathrm{op}}=\|q_i\|_2^2 \le L^2,
\qquad
\|\Sigma_q\|_{\mathrm{op}}
=
\bigl\|\mathbb{E}[q q^\top]\bigr\|_{\mathrm{op}}
\le \mathbb{E}\bigl\|q q^\top\bigr\|_{\mathrm{op}}
= \mathbb{E}\|q\|_2^2
\le L^2,
\]
and thus $\|X_i\|_{\mathrm{op}}\le 2L^2$ almost surely.

Let $S\triangleq \sum_{i=1}^N X_i$. A standard matrix Hoeffding inequality for independent, mean-zero, self-adjoint matrices with $\|X_i\|_{\mathrm{op}}\le R$ gives
\[
\Pr\bigl(\|S\|_{\mathrm{op}}\ge t\bigr)
\le 2n\exp\Bigl(-\frac{t^2}{8\sum_{i=1}^N R^2}\Bigr)
=
2n\exp\Bigl(-\frac{t^2}{8NR^2}\Bigr).
\]
With $R=2L^2$ and $t=N\varepsilon$,
\[
\Pr\Bigl(\bigl\|\widehat\Sigma_q-\Sigma_q\bigr\|_{\mathrm{op}}\ge \varepsilon\Bigr)
=
\Pr\bigl(\|S\|_{\mathrm{op}}\ge N\varepsilon\bigr)
\le 2n\exp\Bigl(-\frac{N\varepsilon^2}{32L^4}\Bigr).
\]
Setting the right-hand side to $\eta/2$ yields
\[
\varepsilon = 4\sqrt{2}\,L^2\sqrt{\frac{\log\bigl(4n/\eta\bigr)}{N}},
\]
which proves~\eqref{eq:conc_sigma} with probability at least $1-\eta/2$.

\subsubsection{Principal component concentration: proof of~\eqref{eq:conc_pca}}
Let $E\triangleq \widehat\Sigma_q-\Sigma_q$ and $\Delta=\lambda_1(\Sigma_q)-\lambda_2(\Sigma_q)>0$. The Davis--Kahan $\sin\Theta$ theorem (for the top eigenvector) gives
\[
\sin\angle\bigl(\hat u_1,u_1\bigr)\le \frac{\|E\|_{\mathrm{op}}}{\Delta}
= \frac{\bigl\|\widehat\Sigma_q-\Sigma_q\bigr\|_{\mathrm{op}}}{\Delta}.
\]
Combining with~\eqref{eq:conc_sigma} yields~\eqref{eq:conc_pca}.

\subsubsection{Union bound}
The events for~\eqref{eq:conc_uap_obj} and~\eqref{eq:conc_sigma} each hold with probability at least $1-\eta/2$. By a union bound, they hold simultaneously with probability at least $1-\eta$, which implies~\eqref{eq:conc_pca} as well. \hfill $\blacksquare$

\section{Detailed Experimental Setup}
\label{app:exp_setup}

In this section, we provide a detailed description of the data generation process, the training details for neural decoders, the hyperparameter configurations for adversarial attacks.

\subsection{Data Generation and Code Construction}
\noindent\textbf{Channel Model.}
We simulate the transmission over an Additive White Gaussian Noise (AWGN) channel using Binary Phase Shift Keying (BPSK) modulation. Let $\mathbf{c} \in \{0, 1\}^n$ denote the codeword. The modulated signal is given by $\mathbf{x} = 1 - 2\mathbf{c}$, where $\mathbf{x} \in \{+1, -1\}^n$. The received signal $\mathbf{y}$ is defined as:
\begin{equation}
    \mathbf{y} = \mathbf{x} + \mathbf{z}, \quad \mathbf{z} \sim \mathcal{N}(0, \sigma^2 \mathbf{I}_n),
\end{equation}
where the noise variance $\sigma^2$ is determined by the Signal-to-Noise Ratio (SNR) via $\sigma^2 = (2 R \cdot 10^{\text{SNR}/10})^{-1}$, with code rate $R = k/n$.

\subsection{Training Details for Neural Decoders}
To ensure a fair comparison, the neural decoders (ECCT~\cite{choukroun2022ecct} and CrossMPT~\cite{parkcrossmpt}) are trained from scratch using the same configuration. For each code configuration $(n, k)$, we generate the training set following the settings in ECCT and CrossMPT. The training samples are generated with SNRs uniformly sampled from the range $[2\,\text{dB}, 8\,\text{dB}]$.

\section{Adversarial Attack Settings}

In this study, we evaluate the robustness of decoders against five distinct attack strategies. All attacks are constrained under the $L_2$-norm and the given energy constraint $\alpha$, where the perturbation $\delta$ must satisfy $||\delta||_2 \leq \alpha||y||_2$.

\noindent\textbf{Random Noise Baseline.} To distinguish adversarial attack methods from general sensitivity to channel noise, we choose a random noise baseline. This perturbation is generated by sampling a vector from a standard normal distribution $\mathcal{N}(0, I)$ and normalizing it to a given energy constraint $||\delta||_2 \leq \alpha||y||_2$.

\noindent\textbf{Fast Gradient Method (FGM).} FGM~\cite{goodfellow2014explaining} serves as a single-step attack that linearizes the loss function to generate adversarial examples. We implement the $L_2$-normalized version, where the perturbation is computed by taking the gradient of the loss with respect to the input, normalizing it to a unit vector, and scaling it to the energy constraint.

\noindent\textbf{Projected Gradient Descent (PGD).} PGD~\cite{madry2017towards} seeks to find the optimal perturbation within the $\epsilon$-ball. We set the attack with $T=20$ iterations. We initialize with a random perturbation and use a step size of $\alpha = 1.2 \times (\epsilon / 20)$, projecting the adversarial perturbation back onto the $\epsilon$-ball satisfying energy constraint.

\noindent\textbf{Universal Adversarial Perturbation (UAP).} UAP~\cite{moosavi2017universal} aims to find an input-agnostic perturbation vector that degrades performance across the entire dataset. We optimize this universal vector using Stochastic Gradient Descent Accumulation (SGDA) over a training set of 50 batches. The optimization uses a learning rate of $\eta=0.05$, with the perturbation clipped at each step to ensure it satisfies the enegy constraint.

\noindent\textbf{PCA-based UAP (UAP-PCA).} UAP-PCA~\cite{choi2022universal} applies Principal Component Analysis (PCA) to select the universal perturbation. We collect gradients from 50 batches of samples and apply PCA. The final attack vector is constructed by aligning the perturbation with the Top-1 principal component of the gradient matrix, scaled to the energy constraint.

\section{Additional Experimental Results}
\subsection{Ablation Study: Change of Energy Constraint}\label{appendix:ablation}
In this section, we present additional results on the impact of varying perturbation energy constraints on LDPC(121,60) codes, as shown in Figure~\ref{fig:ablation}. We observe a consistent trend: as the energy constraint increases, the performance degradation becomes more severe. Furthermore, adversarial attacks consistently induce significantly higher error rates compared to the random noise baseline.
\begin{figure}[htbp]
    \centering
    \includegraphics[width=0.95\linewidth]{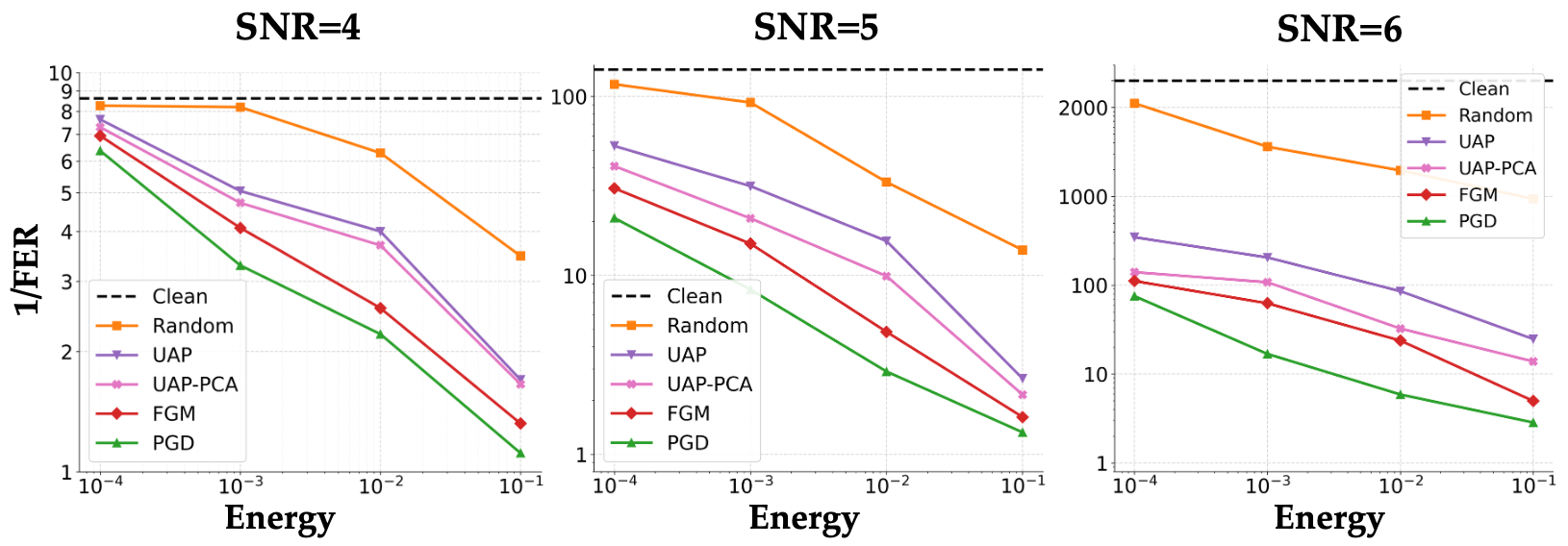}
    \caption{Comparison of \textbf{FER-Inverse (1/FER)} under various energy constraints for ECCT on LDPC(121,60).}
    \label{fig:ablation}
\end{figure}
\subsection{Transferability of Adversarial Perturbation}\label{appendix:transfer}
In this section, we present additional results regarding the transferability of adversarial perturbations. Specifically, we generate optimal perturbations using both the universal method (UAP) and the sample-level method (PGD) targeting the ECCT model on LDPC(121,60) (Table~\ref{tab:transfer}) and Polar(128,64) (Table~\ref{tab:transfer_polar}) codes. We then evaluate these perturbations against various baselines, including AI-based decoders (ECCT, CrossMPT, and the diffusion-based DDECC~\cite{choukroundenoising}) and traditional decoders (Min-sum, Sum-product for LDPC and SC (Successive Cancellation)~\cite{arikan2009channel}, Sum-product for Polar). The results demonstrate that adversarial attacks exhibit strong transferability across AI decoders, whereas traditional decoders remain robust, highlighting the inherent fragility of neural-based decoding architectures.
\begin{table}[H]
    \centering
    \caption{Evaluation of attack transferability measured by \textbf{FER-Inverse (1/FER)} across decoders on LDPC(121,60).}
    \renewcommand{\arraystretch}{1.125}
    \setlength{\tabcolsep}{4pt} 
    
    \resizebox{0.66\textwidth}{!}{%
        \begin{tabular}{c c l c c c}
            \toprule
            \textbf{Type} & \textbf{Decoder} & \textbf{Attack} & \textbf{4 dB} & \textbf{5 dB} & \textbf{6 dB} \\
            \midrule
            
            \multirow{12}{*}{\rotatebox{90}{\textbf{AI Decoder}}} 
            
            & \multirow{4}{*}{ECCT}     
              & Clean & $8.6 \times 10^{0}$ & $1.4 \times 10^{2}$ & $1.1 \times 10^{4}$ \\
            & & Random & $8.2 \times 10^{0}$ & $9.3 \times 10^{1}$ & $3.6 \times 10^{3}$ \\
            & & \textbf{UAP}   & $5.1 \times 10^{0}$ & $3.2 \times 10^{1}$ & $8.0 \times 10^{2}$ \\
            & & \textbf{PGD}   & $3.3 \times 10^{0}$ & $8.3 \times 10^{0}$ & $1.7 \times 10^{1}$ \\ 
            \cmidrule(l){2-6}
            
            & \multirow{4}{*}{CrossMPT} 
              & Clean & $2.1 \times 10^{1}$ & $6.1 \times 10^{2}$ & $1.7 \times 10^{4}$ \\
            & & Random & $1.9 \times 10^{1}$ & $2.7 \times 10^{2}$ & $3.8 \times 10^{3}$ \\
            & & \textbf{UAP}   & $1.7 \times 10^{1}$ & $6.7 \times 10^{1}$ & $8.8 \times 10^{2}$ \\
            & & \textbf{PGD}   & $9.2 \times 10^{0}$ & $3.2 \times 10^{1}$ & $5.5 \times 10^{1}$ \\
            \cmidrule(l){2-6}

            & \multirow{4}{*}{DDECC} 
              & Clean & $1.7\times10^{1}$ & $5.5 \times 10^{2}$ & $1.4 \times 10^{4}$ \\
            & & Random & $1.6\times10^{1}$ & $2.4 \times 10^{2}$ & $3.2 \times 10^{3}$ \\
            & & \textbf{UAP}   & $1.4\times10^{1}$ & $6.5 \times 10^{1}$ & $9.5 \times 10^{2}$ \\
            & & \textbf{PGD}   & $9.5\times10^{0}$ & $4.3 \times 10^{1}$ & $7.3 \times 10^{1}$ \\

            \midrule

            \multirow{8}{*}{\rotatebox{90}{\textbf{Traditional Decoder}}} 
            
            & \multirow{4}{*}{Min-sum}       
              & Clean & $5.4 \times 10^{0}$ & $5.1 \times 10^{1}$ & $2.0 \times 10^{3}$ \\
            & & Random & $5.4 \times 10^{0}$ & $5.1 \times 10^{1}$ & $1.9 \times 10^{3}$ \\ 
            & & \textbf{UAP}   & $5.3 \times 10^{0}$ & $5.0 \times 10^{1}$ & $1.7 \times 10^{3}$ \\
            & & \textbf{PGD}   & $5.0 \times 10^{0}$ & $4.3 \times 10^{1}$ & $1.3 \times 10^{3}$ \\
            \cmidrule(l){2-6}
            
            & \multirow{4}{*}{Sum-product}  
              & Clean & $1.6 \times 10^{1}$ & $4.9 \times 10^{2}$ & $1.5 \times 10^{4}$ \\
            & & Random & $1.6 \times 10^{1}$ & $4.8 \times 10^{2}$ & $1.4 \times 10^{4}$ \\
            & & \textbf{UAP}   & $1.5 \times 10^{1}$ & $4.1 \times 10^{2}$ & $1.1 \times 10^{4}$ \\
            & & \textbf{PGD}   & $1.3 \times 10^{1}$ & $3.3 \times 10^{2}$ & $1.0 \times 10^{4}$ \\
            
            \bottomrule
        \end{tabular}
    }
    \label{tab:transfer}
\end{table}

\begin{table}[H]
    \centering
    \caption{Evaluation of attack transferability measured by \textbf{FER-Inverse (1/FER)} across decoders on Polar(128,64).}
    \renewcommand{\arraystretch}{1.125}
    \setlength{\tabcolsep}{4pt} 
    
    \resizebox{0.66\textwidth}{!}{%
        \begin{tabular}{c c l c c c}
            \toprule
            \textbf{Type} & \textbf{Decoder} & \textbf{Attack} & \textbf{4 dB} & \textbf{5 dB} & \textbf{6 dB} \\
            \midrule
            
            \multirow{12}{*}{\rotatebox{90}{\textbf{AI Decoder}}} 
            
            & \multirow{4}{*}{ECCT}     
              & Clean & $8.5 \times 10^{0}$ & $7.4 \times 10^{1}$ & $1.4 \times 10^{3}$ \\
            & & Random & $8.1 \times 10^{0}$ & $5.9 \times 10^{1}$ & $6.1 \times 10^{2}$ \\
            & & \textbf{UAP}   & $5.4 \times 10^{0}$ & $3.4\times 10^{1}$ & $9.8 \times 10^{1}$ \\
            & & \textbf{PGD}   & $3.0 \times 10^{0}$ & $5.4 \times 10^{0}$ & $8.3 \times 10^{0}$ \\ 
            \cmidrule(l){2-6}
            
            & \multirow{4}{*}{CrossMPT} 
              & Clean & $2.5 \times 10^{1}$ & $2.4 \times 10^{2}$ & $1.0 \times 10^{4}$ \\
            & & Random & $2.2 \times 10^{1}$ & $1.8 \times 10^{2}$ & $5.8 \times 10^{3}$ \\
            & & \textbf{UAP}   & $1.9 \times 10^{1}$ & $8.4 \times 10^{1}$ & $9.2 \times 10^{2}$ \\
            & & \textbf{PGD}   & $5.1 \times 10^{0}$ & $9.1 \times 10^{0}$ & $4.9 \times 10^{1}$ \\
            \cmidrule(l){2-6}

            & \multirow{4}{*}{DDECC} 
              & Clean & $5.2 \times 10^{1}$ & $7.4\times10^{3}$ & $9.7\times10^{5}$ \\
            & & Random & $4.9 \times 10^{1}$ & $6.5 \times 10^{3}$ & $7.1 \times 10^{5}$ \\
            & & \textbf{UAP}   & $2.7\times 10^{1}$ & $2.1\times 10^{3}$ & $3.5\times 10^{4}$ \\
            & & \textbf{PGD}   & $9.7\times 10^{0}$ & $7.5\times 10^{2}$ & $2.2\times 10^{2}$ \\

            \midrule

            \multirow{8}{*}{\rotatebox{90}{\textbf{Traditional Decoder}}} 
            
            & \multirow{4}{*}{SC}       
              & Clean & $4.2 \times 10^{0}$ & $3.8 \times 10^{1}$ & $4.5 \times 10^{2}$ \\
            & & Random & $4.1 \times 10^{0}$ & $3.7 \times 10^{1}$ & $4.2 \times 10^{2}$ \\ 
            & & \textbf{UAP}   & $3.8 \times 10^{0}$ & $3.4 \times 10^{1}$ & $3.7 \times 10^{2}$ \\
            & & \textbf{PGD}   & $3.5 \times 10^{0}$ & $3.2 \times 10^{1}$ & $3.5 \times 10^{2}$ \\
            \cmidrule(l){2-6}
            
            & \multirow{4}{*}{Sum-product}  
              & Clean & $5.6 \times 10^{0}$ & $5.9 \times 10^{1}$ & $6.8 \times 10^{2}$ \\
            & & Random & $5.4 \times 10^{0}$ & $5.7 \times 10^{1}$ & $6.5 \times 10^{2}$ \\
            & & \textbf{UAP}   & $5.1 \times 10^{0}$ & $5.2 \times 10^{1}$ & $6.0 \times 10^{2}$ \\
            & & \textbf{PGD}   & $4.9 \times 10^{0}$ & $5.0 \times 10^{1}$ & $5.6 \times 10^{2}$ \\

            \bottomrule
        \end{tabular}
    }
    \label{tab:transfer_polar}
\end{table}

\end{document}